\documentclass[letterpaper]{article}
\usepackage[preprint]{aaai2027}
\usepackage[hyphens]{url}
\usepackage{graphicx}
\usepackage{natbib}
\usepackage{caption}
\usepackage{algorithm}
\usepackage{algorithmic}
\DeclareCaptionStyle{ruled}{labelfont=normalfont,labelsep=colon,strut=off}
\usepackage{booktabs}
\usepackage{amsmath,amssymb,amsthm}
\graphicspath{{figures/}}

\newcommand{\xmark}{\ensuremath{\times}}   
\theoremstyle{plain}
\newtheorem{proposition}{Proposition}

\newcommand{\CUTscaleup}{At $n{=}12$ qubits, fusion matches the uncut reference within $0.03$
at \emph{every} $k\le6$ ($0.887{\pm}0.027$ vs.\ $0.860$--$0.880$) while $9^k$ climbs to
$5.3{\times}10^{5}$. At $n{=}16$ ($k{=}8$, $9^k{=}4.3{\times}10^{7}$) and $n{=}20$ ($k{=}10$,
$3.5{\times}10^{9}$), fusion still trains to $0.779{\pm}0.050$ and $0.829{\pm}0.058$
against $0.5$ chance ($5$ seeds).}
\newcommand{\CUTfaithful}{over $10$ seeds fusion matches reconstruction and a tuned
classical model on the image tasks (Fashion-MNIST $0.799$ vs.\ $0.799$/$0.797$; MNIST
$0.990$ vs.\ $0.988$/$0.991$), and the pattern persists at doubled feature dimension
(PCA-8 on $4{+}4$-qubit registers: within ${\sim}0.01$ on both). On the hard synthetic task,
fusion ($0.91$) far exceeds the faithful Kawase sum ($0.56$, paired $p{=}0.002$), but here the
cross-cut residual is a small yet statistically detectable shortfall below reconstruction
($0.91$ vs.\ $0.93$; paired gap $-0.024$, $95\%$ CI $[-0.043,-0.003]$).}

\title{How Much Reconstruction Does Quantum Machine Learning Need?\\
       Late Fusion of Independently Trained Quantum Subcircuits}
\author{
    Prabhjot Singh,
    Adel N. Toosi,
    Rajkumar Buyya
}
\affiliations{
    Quantum Cloud Computing and Distributed Systems (qCLOUDS) Lab,\\
    School of Computing and Information Systems,\\
    The University of Melbourne, Melbourne, VIC, 3010, Australia\\
    \texttt{prabhjot.singh@student.unimelb.edu.au}
}

\begin{document}
\maketitle

\begin{abstract}
Circuit cutting lets a large quantum neural network (QNN) run as independent subcircuits on
small devices, but rebuilding its outputs by \emph{reconstruction} carries a classical sampling
overhead \emph{exponential in the number of cuts}---the dominant runtime cost in prior work. We ask whether, for machine-learning tasks, this step is necessary, and replace it
with \emph{late fusion}: each subcircuit is trained and measured independently, and a small
\emph{classical} head combines their outputs---a linear-cost, decision-level combination
borrowed from multimodal learning. To characterize the trade-off we introduce a
\emph{quantumness dial}~$Q$, a tunable reconstruction budget interpolating from pure
fusion to full reconstruction, and a \emph{cut-entanglement diagnostic} that
indicates how much reconstruction a task needs (Spearman $\rho{=}0.59$ over $104$ runs).
Across synthetic and standard datasets, \emph{independently trained late fusion matches full
reconstruction accuracy within $0.04$ at every point of the controlled sweep and on every
classical benchmark, at exponentially lower cost}; it is also markedly more robust to shot and
device noise. Controlled entangled-data experiments locate the boundary
where fusion must fail. We do \emph{not} claim advantage over classical machine
learning---consistent with recent benchmarking, quantum offers no accuracy edge on these
datasets. Late fusion is thus an
efficient, noise-robust, self-characterizing alternative to reconstruction for circuit-cutting
QML.
\end{abstract}

\section{Introduction}

Variational quantum circuits---quantum neural networks (QNNs)---are a leading candidate
application for near-term quantum hardware, but useful models need more qubits than today's
devices reliably provide. \emph{Circuit cutting}~\cite{peng2020,mitarai2021,piveteau2022}
addresses this by partitioning a large circuit into subcircuits that each fit on a small device,
executing them separately, and recombining their results. This has enabled distributed QNN
training~\cite{distributedestimator,marshall2023}.

\paragraph{The exponential wall: reconstruction.}
Recombination is not free. Cutting replaces each non-local gate by a signed
\emph{quasiprobability} mixture of local operations, and rebuilding the original circuit's
expectation values requires (i)~running many extra subexperiments and (ii)~combining them with a
signed sum whose statistical variance is inflated by a factor $\gamma^2$ per cut. Because
$\gamma$ is multiplicative across cuts, the \emph{sampling overhead grows exponentially in the
number of cuts} ($O(\gamma^{2k})$; e.g.\ $O(9^k)$ for CNOT-type cuts). Empirically,
reconstruction dominated end-to-end runtime
($53$--$95\%$)~\cite{distributedestimator}, motivating our central question: \emph{for
machine-learning tasks, is exact reconstruction necessary---or can a cheap, trainable
classical combination of the subcircuits' measurements recover comparable accuracy at
linear cost?}

\paragraph{Our proposal: late fusion.}
We import \emph{late} (decision-level) \emph{fusion} from multimodal deep
learning~\cite{baltrusaitis2019,snoek2005}. Each subcircuit is treated as an independent
``modality'': trained and measured on its own, with a small classical \emph{fusion head}
combining the resulting measurement features into a prediction. This replaces the fixed,
exponential-cost quasiprobability sum with a learned, \emph{linear-cost} classical function
optimized directly for the task loss. Late fusion is \emph{not} an approximation of
reconstruction: reconstruction restores the quantum correlation across the cut; fusion
discards it and asks the head to compensate---whether that suffices is a structural property
of the task, which we make precise.

\paragraph{A dial and a diagnostic.}
Rather than an all-or-nothing choice, we expose a \emph{quantumness dial} $Q\in[0,1]$ that
interpolates between pure fusion ($Q{=}0$) and full reconstruction ($Q{=}1$) by retaining a
tunable fraction of the reconstruction's quasiprobability
mass. Because the reconstruction overhead is lower-bounded by the entanglement
across the cut~\cite{jing2024}, we introduce \emph{cut entanglement} as a diagnostic that
indicates how far up the dial a given task must go.

Empirically, late fusion matches reconstruction wherever the task's information is local while
reconstruction's overhead explodes; it is more robust to shot and device noise; and it fails
exactly where theory says it must (genuinely entangled data)---a regime map we chart
explicitly (tuned classical ML matches the quantum model throughout).

\paragraph{Contributions.}
\begin{itemize}
\item \textbf{A reconstruction--fusion continuum}: the \textbf{quantumness dial} $Q$, a
tunable reconstruction budget with exact reconstruction and pure fusion as its endpoints,
revealing how much of the exponential budget a task actually needs.
\item \textbf{A cut-entanglement diagnostic} that indicates how much reconstruction a task
requires, grounded in the overhead$\leftrightarrow$entanglement-cost bound.
\item \textbf{Independently trained late-fusion QNNs}: replacing exponential reconstruction
with subcircuits trained with \emph{no cross-cut gradient flow} and a trainable classical
fusion head.
\item \textbf{Empirical characterization}: independent late fusion matches reconstruction at
exponentially lower cost and higher noise-robustness; a controlled demonstration of the
regime where reconstruction is necessary; a trainability corollary~\cite{mcclean2018}.
\item \textbf{An honest regime map} relating fusion, reconstruction, and classical ML:
fusion's regime (local information) overlaps where classical ML also suffices, clarifying
where a genuine quantum benefit could live.
\end{itemize}
The ingredients are deliberately simple; the contribution is making the trade-off precise:
prior work reconstructs exactly~\cite{peng2020}, truncates~\cite{marshall2023}, or sums
without learning~\cite{kawase2024}---none exposes the continuum, a diagnostic, or where
fusion provably fails.

\paragraph{Scope.}
We study simulable scales because the exact reconstruction reference anchoring every
comparison exists only there, and verify overhead behavior through
\texttt{qiskit-addon-cutting}. We claim no quantum advantage over classical ML: the value
proposition is conditional. Given a circuit-cutting QML pipeline, we show which resource is
wasted and provide the instrument that detects it.

\section{Related Work}

\paragraph{Circuit cutting and knitting.}
Wire cutting~\cite{peng2020} and gate cutting~\cite{mitarai2021} decompose a non-local operation
into a signed sum of local operations; the sampling overhead is $\gamma^{2}$ per cut and
multiplicative across cuts, giving $O(\gamma^{2k})$~\cite{piveteau2022,brenner2023,ufrecht2024}.
Randomized measurements~\cite{lowe2023} lower the per-cut constant but stay exponential;
classical shadows~\cite{huang2020shadows} give efficient local readouts, not cut
reconstruction.
Jing et al.~\cite{jing2024} prove this overhead is lower-bounded by the entanglement cost across
the cut---the basis for our diagnostic. Systems realizations include
CutQC~\cite{cutqc} and, for QNN training, DistributedEstimator~\cite{distributedestimator},
which quantified reconstruction as the dominant runtime cost. All of these \emph{reconstruct};
we ask what happens if, for ML, we do not.

\begin{table*}[t]
\centering\footnotesize
\setlength{\tabcolsep}{4pt}
\begin{tabular}{@{}llllll@{}}
\toprule
Method & Combiner & Cost in $k$ & Indep. & Dial & Diagnostic \\
\midrule
QPD reconstruction~\cite{peng2020} & exact signed QPD sum & $O(\gamma^{2k})$ & \xmark & \xmark & \xmark \\
DistributedEstimator~\cite{distributedestimator} & exact QPD sum, distributed & $O(9^{k})$ & \xmark & \xmark & \xmark \\
Marshall et al.~\cite{marshall2023} & learned \emph{truncated} QPD sum & chosen budget & \xmark & \xmark\ (fixed $T$) & \xmark \\
Marchisio et al.~\cite{marchisio2024} & mid-circuit re-encoding & linear & \xmark & \xmark & \xmark \\
Kawase~\cite{kawase2024} & param.-free expectation sum & $O(1)$ & \checkmark & \xmark & \xmark \\
\textbf{Late fusion (ours)} & \emph{trained} nonlinear head & $O(1)$ & \checkmark & \checkmark\ ($Q$-dial) & \checkmark\ ($S_A$, knee) \\
\bottomrule
\end{tabular}
\caption{Readout strategies for cut-circuit QML (QPD: quasiprobability decomposition;
reconstruction also includes~\cite{mitarai2021,cutqc}). Cost is in the cut count $k$; the fusion
readout is instead linear in the number of subcircuits (two throughout). ``Indep.'': subcircuits trained with no gradient flow through
the cut. ``Dial'': tunable interpolation between fusion and full reconstruction. ``Diagnostic'':
a measurable indicator of how much reconstruction a task needs.}
\label{tab:related}
\end{table*}

\paragraph{Distributed and modular QML.}
Marshall et al.~\cite{marshall2023} cut a large circuit and learn a \emph{truncated} subset of
the reconstruction sum, trained jointly \emph{through} the cut. Kawase~\cite{kawase2024}
partitions input features across small QNNs and combines them with a parameter-free ensemble sum
of expectation values, avoiding the exponential subexperiment count. Marchisio et
al.~\cite{marchisio2024} train parameters across cut subcircuits by re-encoding mid-circuit
measurements, producing an approximate model. We differ on three axes: we train subcircuits
\emph{independently} (no cross-cut gradient flow), we combine them with a \emph{trained}
classical head (not a fixed or truncated sum), and we expose the full
reconstruction$\leftrightarrow$fusion frontier as a tunable dial with an entanglement diagnostic
that indicates the operating point. Table~\ref{tab:related} summarizes these axes. We reimplement
the Kawase~\cite{kawase2024} parameter-free sum as baseline B3, and compare a Marshall-style
\emph{learned-truncation} readout head-to-head with the $Q$-dial at matched budgets (Results).
Marchisio's mid-circuit re-encoding~\cite{marchisio2024} is a different model class; a faithful
comparison needs its feedback loop, left to future work.

\paragraph{Quantum ensembles.}
Coherent ensembles superpose untrained classifiers~\cite{schuld2018} (and are
dequantizable~\cite{abbas2020}); classical-combination ensembles bag independently trained small
VQCs~\cite{li2024,singlequbitensemble2024}. Both combine \emph{redundant experts on the same
input}; our subcircuits are \emph{partitions of one decomposed circuit}, so the correlation lost
is measurable and the uncut circuit bounds accuracy.

\paragraph{Multimodal fusion.}
The early/intermediate/late fusion taxonomy~\cite{baltrusaitis2019,snoek2005,atrey2010} is the
source of our combiner: late fusion is favored for heterogeneous, unaligned, independently
trained experts---exactly the situation of independently trained subcircuits. Quantum multimodal
models fuse \emph{real} modalities through circuits~\cite{qfl2025}; we treat the
\emph{subcircuits themselves} as modalities.

\paragraph{Benchmarking QML.}
Bowles et al.~\cite{bowles2024} show classical models often match QML at simulable scales and
that removing entanglement frequently changes nothing. We adopt their controls (tuned classical
baselines; entanglement ablation) and report the regime structure rather than claim advantage.

\section{Method}

\subsection{Setup and notation}
A target QNN acts on $n$ qubits partitioned into registers $A$ and $B$. Cutting the
cross-register gates yields subcircuits that run independently. For input $x$ (features split
across $A,B$), subcircuit $j\in\{A,B\}$ produces a measurement feature vector
$m_j(x)\in\mathbb{R}^{d_j}$ (expectations of a chosen local observable set). We compare readouts
that map $(m_A,m_B)$ to a prediction $\hat y$.

\paragraph{Reconstruction vs.\ fusion as function classes.}
Each cross gate is written in its operator-Schmidt form $G=\sum_\mu g_\mu\,(L_\mu\otimes R_\mu)$,
so any product observable $O_A\otimes O_B$ of the uncut circuit reconstructs exactly as
\begin{equation}
\langle O\rangle=\!\!\sum_{\boldsymbol\mu,\boldsymbol\nu}\!
\overline{c_{\boldsymbol\nu}}\,c_{\boldsymbol\mu}\,
\langle A_{\boldsymbol\nu}|O_A|A_{\boldsymbol\mu}\rangle
\langle B_{\boldsymbol\nu}|O_B|B_{\boldsymbol\mu}\rangle,
\label{eq:recon}
\end{equation}
with $c_{\boldsymbol\mu}=\prod_j g_{\mu_j}$ and branch states $|A_{\boldsymbol\mu}\rangle$,
$|B_{\boldsymbol\mu}\rangle$. This is a \emph{fixed, linear} readout whose cost is
$O(\gamma^{2k})$ ($\gamma=\sum_\mu|g_\mu|$). \emph{Late fusion} instead computes
$\hat y=f_\theta(m_A,m_B)$ with a \emph{trainable classical} head $f_\theta$, at a readout cost of
$n_A{+}n_B$ measurement settings---\emph{independent of} $k$; it discards the cross-branch
($\boldsymbol\mu\neq\boldsymbol\nu$) structure that carries the cross-cut quantum correlation.

\subsection{Independent late-fusion training}
The subcircuits carry \emph{no} connecting gate and are optimized against the shared task loss
through the head (Alg.~\ref{alg:fusion}). Throughout, \emph{independent} means precisely this: no
gradient crosses the cut and no cross-cut entanglement is ever created---the subcircuits couple
only classically, through $f_\theta$.

\begin{algorithm}[t]
\caption{Independent late-fusion training}
\label{alg:fusion}
\begin{algorithmic}[1]
\REQUIRE data $\{(x_i,y_i)\}$; subcircuit ans\"atze $U_A,U_B$; fusion head $f_\theta$
\STATE initialize subcircuit params $\phi_A,\phi_B$ and head params $\theta$
\REPEAT
  \FOR{each (mini)batch}
    \STATE $m_A^i \gets \textsc{Measure}(U_A(\phi_A);x_i^A)$ \COMMENT{local, no coupling}
    \STATE $m_B^i \gets \textsc{Measure}(U_B(\phi_B);x_i^B)$
    \STATE $\hat y_i \gets f_\theta(m_A^i,m_B^i)$
    \STATE update $\phi_A,\phi_B,\theta$ by $\nabla\,\mathcal{L}(\hat y_i,y_i)$
  \ENDFOR
\UNTIL{converged}
\RETURN $\phi_A,\phi_B,\theta$
\end{algorithmic}
\end{algorithm}

\subsection{The quantumness dial $Q$}
To interpolate between fusion and reconstruction we retain only the highest-magnitude terms of
Eq.~\eqref{eq:recon} up to a fraction $Q$ of the coefficient $1$-norm mass, and feed the partial
reconstruction \emph{together with} the raw subcircuit features to the head
(Alg.~\ref{alg:qdial}). $Q{=}0$ recovers fusion on the coupled circuit's \emph{frozen} features;
$Q{=}1$ recovers exact reconstruction; in
between, only the retained terms must be estimated, so the sampling cost interpolates from
$O(1)$ (no reconstruction terms) to the full $\gamma^{2k}$. Branch states and coefficients
come from the trained coupled circuit (the model reconstruction would use); the head is
retrained per $Q$, and the knee locates the cheapest sufficient setting.

\begin{algorithm}[t]
\caption{$Q$-dial readout (partial reconstruction $+$ fusion)}
\label{alg:qdial}
\begin{algorithmic}[1]
\REQUIRE branch states/coeffs $\{c_{\boldsymbol\mu}\}$; observables $\{O\}$; level $Q$
\STATE $w_{\boldsymbol\nu\boldsymbol\mu}\gets|c_{\boldsymbol\nu}||c_{\boldsymbol\mu}|$;
       sort terms by $w$ descending
\STATE keep the top terms until $\sum w \ge Q\sum_{\text{all}} w$
\STATE $r_O \gets$ partial sum of Eq.~\eqref{eq:recon} over kept terms, $\forall O$
\RETURN $f_\theta\big(\,[\,\{r_O\}\ \|\ m_A\ \|\ m_B\,]\big)$
\end{algorithmic}
\end{algorithm}

\subsection{Cut-entanglement diagnostic}
We measure the entanglement entropy $S_A(x)=S(\rho_A(x))$ across the cut of the trained circuit,
averaged over the data. By~\cite{jing2024} the reconstruction overhead is exponentially
lower-bounded by this quantity, so the resource fusion \emph{saves} and the correlation it
\emph{drops} are the same---making $S_A$ a natural predictor of the $Q$-dial knee and an
indicator of the fusion accuracy gap. $S_A$ is a property of the \emph{trained} coupled
circuit (at initialization, or after only five optimizer iterations, it carries no signal:
Spearman $|\rho|{<}0.1$, $104$ runs); its value lies in predicting the knee without paying the
$\gamma^{2k}$ sampling cost of a dial sweep.

\subsection{When is late fusion lossless?}
\begin{proposition}[Locality condition]
\label{prop:lossless}
Let $p^\star(y\mid x)$ be the Bayes-optimal predictor. If $p^\star$ depends on the joint state
$\rho_{AB}(x)$ only through the marginals $\rho_A(x),\rho_B(x)$---equivalently, there exist
local observable sets whose expectations $m_A,m_B$ are sufficient statistics for $y$---then a
sufficiently expressive fusion head $f_\theta$ attains the Bayes risk, matching reconstruction.
Conversely, if $p^\star$ depends on a cross-cut correlator $\langle O_A\otimes O_B\rangle$ not
determined by $(\rho_A,\rho_B)$, then for any $f_\theta$ acting on local features alone the
excess risk is bounded below by the label information carried by that correlator.
\end{proposition}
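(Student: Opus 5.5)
The plan is to treat both directions as statements about sufficient statistics and the data-processing inequality, with the quantum structure entering only through which functionals of $\rho_{AB}(x)$ the readouts can access. I will fix a loss $\ell$ (log-loss for concreteness, so that excess risk is a conditional KL divergence and ``label information'' is a conditional mutual information) and view each readout as a map from $\rho_{AB}(x)$ to a feature vector followed by a predictor.

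\textbf{Sufficiency direction.} By hypothesis, $p^\star(y\mid x)=h(\rho_A(x),\rho_B(x))$ for some function $h$. Take $m_A,m_B$ to be the expectations of an informationally complete local observable set, for example Pauli strings on each register. Then $\rho_j(x)$ is recovered linearly from $m_j(x)$, so $p^\star(y\mid x)=h\circ\Lambda(m_A(x),m_B(x))$ with $\Lambda$ the tomographic inversion. A sufficiently expressive head (a universal approximator on the compact set of valid feature vectors, invoking Cybenko/Hornik) can approximate $h\circ\Lambda$ uniformly to any $\varepsilon$. The step needing care is that $h$ may be discontinuous. To handle it, I would state the claim as attaining the Bayes risk in the limit, approximating $h$ in $L^1$ of the data distribution by continuous functions, and use boundedness of the clipped log-loss. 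Reconstruction's risk is itself lower-bounded by the Bayes risk, so the two match.

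\textbf{Converse.} Let $Z=(\rho_A(x),\rho_B(x))$ and $C=\langle O_A\otimes O_B\rangle_{\rho_{AB}(x)}$, where $C$ is not a function of $Z$. Any $f_\theta(m_A,m_B)$ is a function of $Z$, so by the standard decomposition of log-loss its risk is at least $H(Y\mid Z)$. The Bayes risk is $H(Y\mid X)\le H(Y\mid Z,C)$. Hence the excess risk is at least $H(Y\mid Z)-H(Y\mid Z,C)=I(Y;C\mid Z)$, which is precisely ``the label information carried by that correlator'' beyond the marginals.

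For this bound to be meaningful, I must show $I(Y;C\mid Z)>0$ exists under the hypothesis. I would show this by exhibiting that $p^\star$ depending on $C$ non-trivially given $Z$ forces the conditional law of $Y$ to vary with $C$ at fixed $Z$. Product versus Bell-diagonal states with identical maximally mixed marginals give the witness example, in the spirit of the entangled-data experiments. For other losses I would substitute a Bayes-risk gap via Le Cam or Fano, yielding a lower bound that is monotone in the same conditional information.

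\textbf{Main obstacle.} The main obstacle is making the converse non-vacuous and loss-agnostic. The informal phrase ``depends on a correlator'' has to be formalized as $I(Y;C\mid Z)>0$ under the data distribution. I will first try to simply adopt this as the definition and prove the log-loss bound cleanly. I would then remark that for 0-1 loss the analogous quantity is $\mathbb{E}_Z[\text{total-variation spread of } p(y\mid Z,C)]$, which is lower-bounded via Pinsker.
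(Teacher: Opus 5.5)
Your argument is correct as a formalization of the statement, but it takes a different route from the paper's. The paper uses $0$--$1$ loss throughout. For sufficiency it simply \emph{assumes} the Bayes predictor factors through the measured features, $\Pr[y{=}1\mid x]=g(M(x))$, and then combines $L^1$ universal approximation with the plug-in bound (excess error controlled by $\mathbb{E}|f_\theta(M)-g(M)|$). For the converse it adds the determinism hypothesis $H(Y\mid M,C)=0$. The chain rule then gives $H(Y\mid M)=I(Y;C\mid M)=\varepsilon$, and binary Fano yields $\Pr[f(M)\neq Y]\ge h_2^{-1}(\varepsilon)$ against a Bayes-with-$C$ error of zero.

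You instead use log-loss, which changes what each direction needs:
\begin{itemize}
\item The converse becomes the one-line bound $R(f)-R^\star\ge H(Y\mid Z)-H(Y\mid Z,C)=I(Y;C\mid Z)$. It needs no determinism assumption, and the bound is literally the ``label information.''
\item Because you condition on the full marginals $Z$, the bound holds uniformly for every choice of local observables.
\item Your tomographic step (informationally complete local Paulis, linear inversion $\Lambda$) actually proves the ``equivalently'' clause, which the paper only takes as a hypothesis. The price is that your sufficient feature set is exponential in register width, whereas the paper's hypothesis concerns whatever finite $M$ is actually measured.
\item Log-loss costs you the clipping argument you already flag, since $L^1$ closeness alone does not control the KL excess.
\end{itemize}
What the paper's version buys is a statement about classification error, the metric its experiments report.

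The one claim in your proposal that fails is the remark about other losses. For $0$--$1$ loss there is no lower bound on excess risk that is positive whenever $I(Y;C\mid Z)>0$. Suppose $C$ moves $\Pr[Y{=}1\mid Z,C]$ between $0.6$ and $0.9$ at fixed $Z$. Then the Bayes decision never changes, so the best $Z$-measurable classifier has zero excess error while the conditional information is strictly positive (by strict concavity of $h_2$). The $0$--$1$ excess is governed not by the total-variation spread of $p(y\mid Z,C)$ but by how often $C$ flips which side of $1/2$ that posterior lies on. Pinsker also points the wrong way: it bounds total variation \emph{above} by KL, so it cannot lower-bound an error gap by mutual information.

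This is exactly why the paper needs $H(Y\mid M,C)=0$. It pins the reference error at $0$, so Fano's lower bound on the fusion error becomes a lower bound on the excess. If you want the accuracy version, add that hypothesis or a comparable margin condition, rather than appealing to Le Cam, Fano or Pinsker in general.
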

\noindent\emph{Proof sketch.} If $m_A,m_B$ are sufficient for $y$, any consistent classifier on
them attains the Bayes risk, which $f_\theta$ can represent; reconstruction adds no
label-relevant information. Otherwise Fano's inequality lower-bounds the excess risk by
$h_2^{-1}(I(y;C\mid m_A,m_B))$; formal proof in the technical appendix. $\square$

\noindent Proposition~\ref{prop:lossless} explains our empirical regime map: fusion is lossless
exactly when the task's useful information is local, and the diagnostic $S_A$ above measures how
far a task departs from that condition.

\section{Experimental Setup}

\paragraph{Circuits and cutting.}
Target QNNs use a hardware-efficient ansatz (per-qubit $R_Y,R_Z$ rotations with a $CZ$
entangling ring) on registers $A,B$ joined by $k$ cross gates. We use two engines: an exact
state-vector simulator with a hand-verified operator-Schmidt gate cut (reconstruction reproduces
the uncut circuit to $<10^{-10}$), used for controlled studies; and the
\texttt{qiskit-addon-cutting} library with the physical quasiprobability decomposition, used to
confirm the results at the level of real subexperiments and $\gamma$ factors.

\paragraph{Baselines.}
\textbf{B0} uncut circuit (gold standard); \textbf{B1} full quasiprobability reconstruction
(exact, $=$B0); \textbf{B2} late fusion (linear and MLP heads); \textbf{B3} Kawase-style
parameter-free expectation-value sum~\cite{kawase2024}; \textbf{B4} entanglement ablation
(coupling angle $\to 0$); and tuned classical baselines (SVM-RBF, MLP) on the identical
preprocessed features. We
report both \emph{frozen} fusion (head trained on a fixed circuit's subcircuit features) and
\emph{independent} fusion (Alg.~\ref{alg:fusion}).

\paragraph{Datasets.}
Controlled synthetic \textsc{separable}$\to$\textsc{parity} generator with a knob $\alpha$
tuning cross-cut label dependence; standard tabular/2D sets (Iris~\cite{fisher1936},
Breast-Cancer~\cite{uci}, Moons, Circles); downscaled image tasks
(Fashion-MNIST~\cite{fashionmnist2017} and MNIST~\cite{mnist1998} binary, PCA-4 and PCA-8); multiclass tasks
(3-class Iris and MNIST 0/1/2, with a softmax fusion head); and two \emph{quantum-native}
tasks: an entangled-state task whose label is a cross-cut correlator with uninformative local
marginals, and TFIM ground-state classification (phase and bond) by exact diagonalization. Classical features are scaled to $[-\pi/2,\pi/2]$
(angle encoding).

\paragraph{Cost metrics.}
Two overhead conventions appear and we keep them distinct. In the controlled statevector studies
the reported reconstruction overhead is $\gamma_S^{2k}$ with $\gamma_S$ the per-cut
operator-Schmidt coefficient $1$-norm of the \emph{trained} coupling gate---an exact-simulation
convention. The physical quasiprobability overhead follows a different per-cut constant,
$(1+2|\sin\phi|)^{2}$ for $R_{ZZ}(\phi)$ cuts (exactly $9$ at $\phi{=}\pi/2$), which we verify
through \texttt{qiskit-addon-cutting} (Fig.~\ref{fig:overhead}(b)); both are exponential in
$k$, and neither bounds the other at general trained angles. Late fusion's
``overhead'' is its measurement-setting count ($n_A{+}n_B$ local settings), constant in $k$;
$Q$-dial overhead is the $1$-norm mass of the retained reconstruction terms (floored at $1$).

\paragraph{Metrics and protocol.}
Accuracy (primary; F1/AUC in the released result files); resource metrics: subexperiment
count, sampling overhead, and shots. Paired per-seed statistics (bootstrap $95\%$ CIs,
Wilcoxon signed-rank tests) for every headline comparison appear in the technical appendix.
We report mean\,$\pm$\,std over seeds ($10$ for the $\alpha$-sweep, benchmarks, boundary,
shots, cost-crossover, harder-data, and Marshall studies; $8$ for the dense-$\alpha$
diagnostic; $5$ for multiclass, ablation, width, and the scale-up study), each with a
held-out test split per seed ($30$--$40\%$ depending on the study). For the real datasets,
standardization, PCA, and angle scaling are fit on the \emph{training split only} and applied
unchanged to the test split (test features are clipped to the encoding range), identically for
every model including the classical anchors; the synthetic generator produces features
i.i.d.\ uniform in $[-1,1]$, rescaled to angles by a label-independent map. In the
controlled studies we scale the number of cuts $k\!\in\!\{1,2,3\}$ and the total width
$n\!\in\!\{4,6,8\}$ independently; the scale-up study extends these to $n{=}20$, $k{=}10$. The noise ladder is state-vector $\to$ finite shots (Gaussian noise on
expectation values with the $\gamma$-scaled standard deviation the quasiprobability estimator
incurs) $\to$ Aer device noise under \emph{two} models: a custom
depolarizing-plus-readout model and a real IBM fake-backend (\texttt{FakeManilaV2}) calibration.

\section{Results}

\subsection{Late fusion matches reconstruction at exponentially lower cost}
Table~\ref{tab:bench} reports the readout ladder on standard datasets: B2 (MLP fusion) tracks
B0/B1 within noise on all four, and the value of the \emph{trained} head over the
parameter-free Kawase sum (B3) and the entanglement ablation (B4) is largest where feature
interactions matter (Breast-Cancer: $+0.26$ and $+0.08$). Figure~\ref{fig:overhead}(a) shows
the cost crossover ($10$ seeds): reconstruction overhead grows exponentially with the number of
cuts ($7.3\to45.9\to408$ for $k=1,2,3$; subexperiments $4\to16\to64$) while fusion stays
flat. \emph{Frozen} fusion trails the uncut model by $0.04$--$0.11$ on this moderately
cross-cut task ($\alpha{=}0.5$)---the gap independent training closes (next subsection;
$\le0.03$ at scale, Fig.~\ref{fig:scaling}(a)).

\begin{table}[t]
\centering\footnotesize
\setlength{\tabcolsep}{3.5pt}
\begin{tabular}{lcccc}
\toprule
dataset & B0/B1 & B2\,(MLP) & B3\,(Kawase) & B4\,(abl.) \\
\midrule
Moons & $.879{\pm}.035$ & $.881{\pm}.060$ & $.814{\pm}.048$ & $.843{\pm}.032$ \\
Circles & $.981{\pm}.007$ & $.972{\pm}.012$ & $.974{\pm}.013$ & $.978{\pm}.010$ \\
Iris & $1.00{\pm}.000$ & $1.00{\pm}.000$ & $.937{\pm}.151$ & $1.00{\pm}.000$ \\
Breast-C. & $.939{\pm}.013$ & $.932{\pm}.014$ & $.670{\pm}.063$ & $.849{\pm}.079$ \\
\bottomrule
\end{tabular}
\caption{Standard datasets (mean$\pm$std, $10$ seeds). Late fusion (B2) stays within $0.01$ of
reconstruction (B0/B1 coincide) on every dataset (Circles: $-0.009$, small but significant;
Appendix); it beats the Kawase sum (B3) decisively and the ablation (B4) by a smaller margin
on Breast-Cancer, with smaller or no margins on the easier sets.}
\label{tab:bench}
\end{table}

\begin{figure*}[t]
\centering
\includegraphics[width=0.87\textwidth]{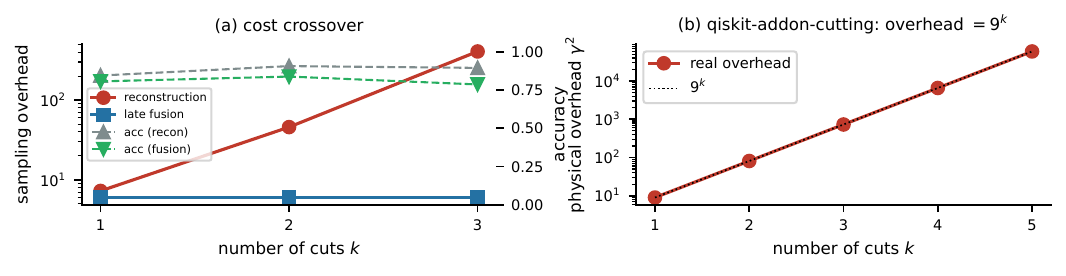}
\caption{(a) Cost crossover on the controlled simulator: reconstruction overhead (red) grows
exponentially with the number of cuts while late fusion (blue) stays flat; reconstruction
accuracy (gray dashed) tracks the ideal circuit, frozen fusion (green dashed) trails by
$0.04$--$0.11$ (see text). (b) The same wall through the real \texttt{qiskit-addon-cutting}
library: measured physical overhead is exactly $9^k$ (dotted), with reconstruction matching
exact expectations to shot noise.}
\label{fig:overhead}
\end{figure*}

\subsection{Independent training closes the residual gap}
Re-using frozen subcircuits leaves a residual gap at high cross-cut dependence; training the
subcircuits \emph{for} the fusion objective (Alg.~\ref{alg:fusion}) closes it
(Fig.~\ref{fig:independent}). The mean shortfall versus reconstruction falls from $0.098$ (frozen)
to $0.006$ (independent) over the $\alpha$ sweep---pooled paired gap (fusion$-$rec.) $-0.006$,
$95\%$ CI $[-0.018,+0.005]$ over $40$ seed-pairs---matching reconstruction for $\alpha\le0.75$
with a small residual at pure parity ($-0.037$, CI $[-0.057,-0.019]$, $p{=}0.008$). The trained
subcircuits, not the head alone, carry the signal: the same head on \emph{untrained}
random-parameter subcircuits reaches only $0.51$--$0.64$ across the sweep (on easy tabular
data random features suffice: $0.93$ on Breast-Cancer).

\begin{figure}[t]
\centering
\includegraphics[width=0.92\columnwidth]{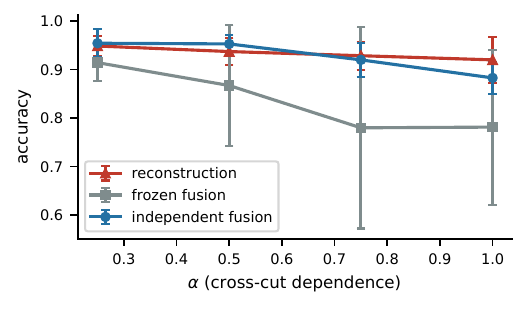}
\caption{Independent training (circles) matches reconstruction (triangles) across the cross-cut
sweep, closing the residual gap left by frozen fusion (squares).}
\label{fig:independent}
\end{figure}

\subsection{Fusion is more robust to finite-shot noise}
Reconstruction's signed sum amplifies shot noise by $\gamma^2$. Figure~\ref{fig:noise}(a) shows
that at shot budgets typical of current hardware ($\le\!1000$) fusion is \emph{more accurate} than
reconstruction (e.g.\ $0.85$ vs.\ $0.66$ at $32$ shots; paired gap $+0.19$, $p{=}0.002$); they
converge only in the many-shots
limit. Per-expectation shot counting favors reconstruction ($16$ subexperiments vs.\
fusion's $2$ settings); at equal \emph{total} budget the disparity is more
pronounced---$0.85$ vs.\ $0.56$ at $64$ executions, reconstruction needing
${\sim}6{\times}10^{4}$ to close the gap.

The same fragility holds for \emph{device} noise. We test two Aer noise models---a
depolarizing-plus-readout model and a \emph{real} IBM fake-backend calibration
(\texttt{FakeManilaV2}, circuits transpiled to its native basis)---against a
\emph{depth-matched} fusion probe (the actual fusion subcircuit's gate count). Under both, the
reconstructed estimate carries more error, roughly flat in $k$: $1.6\times$ ($0.072$ vs.\
$0.046$) under the depolarizing model and $6.0\times$ ($0.059$ vs.\ $0.010$) under the real
device calibration, because the $\gamma$-weighted signed sum amplifies the fixed
per-subexperiment bias; a fusion feature is one subcircuit measurement, so its error is
independent of the cut count (Fig.~\ref{fig:noise}(b)).

\begin{figure*}[t]
\centering
\includegraphics[width=0.87\textwidth]{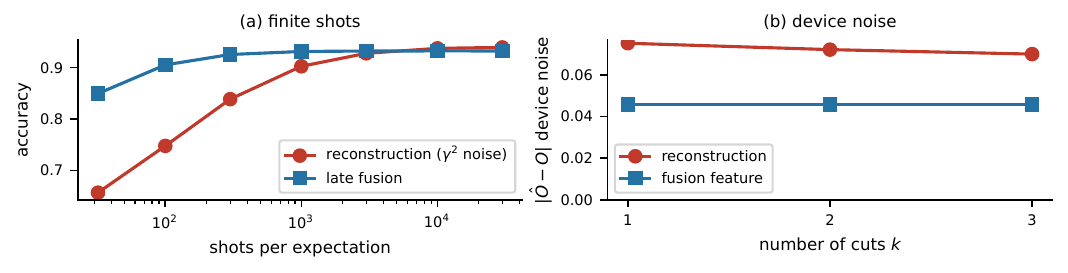}
\caption{Noise robustness. (a) Under finite shots, late fusion (blue) is more accurate than
reconstruction (red), which pays a $\gamma^2$ variance penalty. (b) Under device noise,
reconstruction's error stays above a depth-matched fusion feature's ($1.6\times$ here;
$6\times$ under a real IBM calibration), which is independent of the cut count.}
\label{fig:noise}
\end{figure*}

\subsection{Not a simulator artifact: the real cutting library}
To rule out a simulator artifact, we re-ran the pipeline through the
\texttt{qiskit-addon-cutting} quasiprobability library with an Aer sampler, up to $k{=}5$ cuts.
The measured sampling overhead is \emph{exactly} $9^k$ through $59{,}049$ at $k{=}5$
(Fig.~\ref{fig:overhead}(b)), and the wall appears in the \emph{error budget}: as the
${\sim}6^k$ subexperiments force the per-subexperiment shots down ($10^5\!\to\!4{\times}10^3$),
reconstruction error explodes $0.002\!\to\!0.068$ from $k{=}3$ to $5$---roughly $6{\times}10^7$
total shots for $0.07$ error, where a single fusion feature reaches ${\sim}0.01$ from $10^4$.
The overhead wall that fusion avoids is thus a property of the physical decomposition, not
of our implementation.

\subsection{Validation on live hardware}
On a live IBM Heron device (\texttt{ibm\_fez}), with each half-width subcircuit transpiled
and run independently, \emph{full} per-subexperiment shots leave reconstruction and a fusion
feature at comparable error (five instances; $k{=}2$: rec.\ $0.017$ vs.\ fusion $0.020$) at
$36\times$ the executions. At an \emph{equal} $8000$-shot budget (three instances)
the wall appears live: reconstruction error grows $0.023\!\to\!0.091\!\to\!0.223$ for
$k{=}1,2,3$ while fusion stays at $0.013$--$0.034$---a $7\times$ gap already at $k{=}2$.

\subsection{The diagnostic and the boundary}
The $Q$-dial traces an accuracy/cost frontier (Fig.~\ref{fig:qdial}): sweeping $Q$ from $0$
(fusion only) to $1$ (full reconstruction) recovers accuracy while the retained sampling
overhead climbs from $O(1)$ to $\gamma^{2k}$. The $Q{=}0$ endpoint is \emph{frozen}, not
independent, fusion (cf.\ Fig.~\ref{fig:independent})---hence the small nonzero
low-$\alpha$ knee.
The \emph{knee}---the smallest $Q$ within $0.02$ of the $Q{=}1$ accuracy---rises with
cross-cut dependence $\alpha$, from $0.08$--$0.11$ at $\alpha\le0.25$ to $0.16$--$0.20$ at
$\alpha\ge0.5$. On a dense grid ($13$ $\alpha$-values $\times$ $8$ seeds, $104$ runs), the
trained circuit's cut entanglement $S_A$ correlates with the knee at Spearman $\rho{=}0.59$
($95\%$ CI $[0.46,0.68]$) and with the fusion gap at $\rho{=}0.44$ (CI $[0.26,0.59]$)---a
moderately strong knee predictor and a moderate gap indicator. Partialing out $\alpha$ leaves
$\rho{=}0.60$ $[0.43,0.69]$ (positive in every $\alpha$ cell), so this is not the shared
$\alpha$-trend.

\paragraph{Against learned truncation.}
A Marshall-style readout~\cite{marshall2023}---\emph{learned} weights on the top-$T$
reconstruction terms---is the $Q$-dial's closest alternative. Pooled over $\alpha$ at matched
budgets, the $Q$-dial leads at every $T$ ($10$ seeds; $T{=}1$: paired gap $+0.04$,
$p{<}10^{-3}$), with the largest margins at small budgets on hard tasks ($0.74$ vs.\ $0.63$
at $T{=}1$, $\alpha{=}0.75$). The advantage is not merely extra inputs: granting the
truncation the same raw fusion features recovers little ($0.64$ on that cell).

\begin{figure}[t]
\centering
\includegraphics[width=0.92\columnwidth]{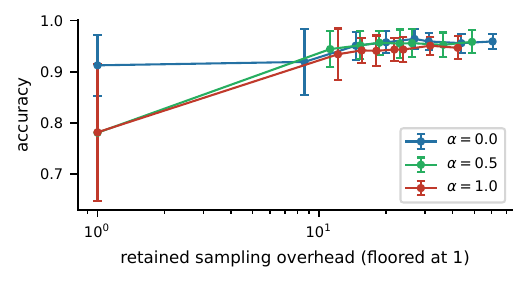}
\caption{$Q$-dial accuracy/cost frontier (mean$\pm$std, $10$ seeds): accuracy saturates well
before full reconstruction, so most of the $\gamma^{2k}$ overhead yields little additional
accuracy; low-cost fluctuations are within seed noise.}
\label{fig:qdial}
\end{figure}

Figure~\ref{fig:regime} shows the quantum-necessity boundary ($10$ seeds): on entangled data
the label lives in a cross-cut correlator with uninformative local marginals. Late fusion holds
$\ge\!0.97$ accuracy through $0.988$ bits of data entanglement, falls to $0.78$ at $0.998$,
and reaches chance ($0.51$) at maximal entanglement, while reconstruction stays $\ge\!0.994$
throughout---exactly the regime Proposition~\ref{prop:lossless} predicts fusion cannot serve.

\begin{figure}[t]
\centering
\includegraphics[width=0.92\columnwidth]{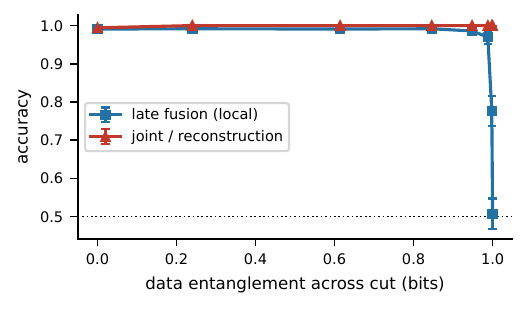}
\caption{Quantum-necessity regime: as data entanglement across the cut grows, late fusion
(local measurements) collapses to chance while joint/reconstruction readout stays at or
near perfect ($\ge\!0.994$).}
\label{fig:regime}
\end{figure}

\subsection{A quantum-native task: TFIM}
For \emph{quantum-state} inputs, classical ML on raw features is undefined; the only
question is which quantum readout to pay for. On transverse-field Ising ground states ($n{=}8$
spins, exact diagonalization, $400$-shot features), fusion matches joint readout on
phase classification ($1.000$ both, $5$ seeds): the order parameters are local. On \emph{bond detection}---whether the cross-cut coupling is physically
present, a label the diagnostic separates exactly ($S_A{\approx}0.15$ vs.\ $0$)---fusion reaches
$0.989$ vs.\ joint $1.000$, since deleting the bond also shifts boundary-spin
marginals---cross-cut dependence need not be locally invisible; $S_A(g)$ peaks near
criticality (Fig.~\ref{fig:scaling}(c)).

\begin{figure*}[!t]
\centering
\includegraphics[width=0.87\textwidth]{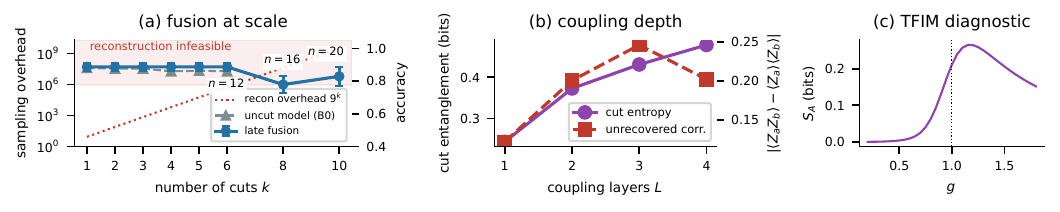}
\caption{Scaling. (a) Late fusion (blue) keeps training as cuts and width grow to $k{=}10$,
$n{=}20$, matching the uncut model (gray) wherever that reference is still trainable, while
reconstruction's $9^k$ overhead (dotted) enters the infeasible regime (shaded). (b) With more
coupling layers, both cut entanglement (purple) and the connected correlation fusion cannot
recover (red) grow---deeper coupling discards more.
(c) The TFIM diagnostic $S_A(g)$ peaks near the critical point $g{=}1$ (finite-size shift).}
\label{fig:scaling}
\end{figure*}

\subsection{Fusion where reconstruction is infeasible}
Gradient variance of a random ansatz at $n=2$--$10$ qubits decays exponentially in qubit
count---a barren plateau~\cite{mcclean2018}---at slope $-0.47$/qubit, so each half-width
subcircuit operates at exponentially larger gradient variance than the full circuit (a corollary
of cutting, inferred from the variance scaling rather than a comparative training study).
Because fusion never builds the joint state, it keeps training
at scales where reconstruction cannot follow. \CUTscaleup{} At $n{=}16$, our
\emph{uncut statevector reference} exceeded $35$ CPU-hours without converging, while fusion
trained in minutes (Fig.~\ref{fig:scaling}(a)). Anchoring these reference-free cells: the
generator's empirical ceiling is ${\approx}0.98$ ($10^4$ samples), and a classical anchor at
the experiment's $N{=}160$ reaches $0.86{\pm}0.02$---fusion's $0.78$--$0.83$ is
sample-limited, not method-limited. At smaller widths
($n{=}4,6,8$, single cut, $5$ seeds) the fusion--reconstruction gap is non-growing ($0.00,0.02,0.01$). We also probe deeper circuits by sweeping $L=1$--$4$ coupling layers,
measuring correlation structure rather than trained accuracy: cut entanglement grows
monotonically with $L$ ($0.24\!\to\!0.48$ bits), and the connected cross-cut correlation that a
marginal-feature fusion head structurally cannot recover grows to ${\sim}0.25$ by $L{=}3$
and stays in the $0.20$--$0.25$ band (Fig.~\ref{fig:scaling}(b)). Reconstruction stays exact at any depth for its $9^k$
cost: coupling depth is the same quantumness axis as data entanglement.

\subsection{Harder data and faithful prior-art baselines}
Table~\ref{tab:faithful} consolidates the harder-data comparison: \CUTfaithful{} On the
local image tasks the Kawase~\cite{kawase2024} sum ties fusion; the trained head earns its
keep on cross-cut structure, where the fixed sum collapses. A softmax head extends the method
to $3$-class tasks within ${\sim}0.01$ of classical (appendix).

\begin{table}[t]
\centering\footnotesize
\setlength{\tabcolsep}{1.5pt}
\begin{tabular}{@{}lcccc@{}}
\toprule
dataset & recon.\,(B1) & \textbf{fusion\,(B2)} & Kawase\,(B3) & classical \\
\midrule
F-MNIST & $.799{\pm}.054$ & $.799{\pm}.054$ & $.790{\pm}.059$ & $.797{\pm}.044$ \\
MNIST & $.988{\pm}.009$ & $.990{\pm}.008$ & $.982{\pm}.009$ & $.991{\pm}.009$ \\
F-MNIST-8 & $.796{\pm}.052$ & $.786{\pm}.051$ & $.797{\pm}.032$ & $.791{\pm}.041$ \\
MNIST-8 & $.985{\pm}.017$ & $.987{\pm}.014$ & $.981{\pm}.014$ & $.990{\pm}.010$ \\
synth.\ $\alpha{=}0.75$ & $.931{\pm}.026$ & $.907{\pm}.038$ & $.564{\pm}.089$ & $.940{\pm}.027$ \\
\bottomrule
\end{tabular}
\caption{Harder data and faithful baselines (mean$\pm$std, $10$ seeds; ``-8'' rows are PCA-8
on $4{+}4$ qubit registers, others PCA-4 on $2{+}2$).}
\label{tab:faithful}
\end{table}

\section{Discussion}
\paragraph{The local/classical tension.}
Tuned classical baselines match or beat the quantum model on every classical
dataset~\cite{bowles2024}: fusion's edge is over reconstruction, not classical ML. A fusion--reconstruction tie \emph{indicates} that the task's label information is
recoverable from local marginals---the pipeline gains nothing from its cross-cut
correlation---doubling as an audit instrument. Fusion's regime (local information) overlaps where classical models succeed; a
quantum--classical separation would need cross-cut information---exactly where fusion fails. We claim \emph{efficiency and
diagnosis within QML}.

\paragraph{Limitations.}
\textbf{(i) Scale:} fusion trains to $20$ qubits/$10$ cuts ($9^k$ verified to $k{=}5$), but
the exact \emph{accuracy} reference stops at $12$ qubits/$6$ cuts. \textbf{(ii) Coupling depth:} measured
(Fig.~\ref{fig:scaling}(b)), but deep models are not trained. \textbf{(iii) Noise is mostly simulated:} finite shots plus the two Aer
device-noise models of Fig.~\ref{fig:noise}(b); live-device checks total $10$ full-shot runs
($k\le2$) and $9$ equal-budget runs ($k\le3$), all on one device (biases partially
cancel in the QPD sum; see appendix).
\textbf{(iv) Diagnostic cost and theory:}
cut entanglement presupposes a trained \emph{coupled} circuit (at scale,
reconstruction-trained); train-free proxies are open, and a conjectured $S_A$-to-gap bound
(appendix) awaits proof.

\paragraph{Conclusion.}
Late fusion replaces exponential-cost reconstruction with a linear-cost, noise-robust readout
that matches it whenever task information is local (Proposition~\ref{prop:lossless}) and
provably fails when it is not; the dial and diagnostic reveal which regime holds (code and
results available from the authors on request). Open directions: train-free diagnostic proxies, broader
hardware runs, and tasks with \emph{local} quantum advantage fusion could inherit.

\paragraph{Use of generative AI.}
A large language model assisted with methodology assessment, code implementation, and
manuscript writing. The authors made all design and interpretation decisions, verified every
reported number against the released results, and are responsible for the entire
content (appendix~\S E).

\begingroup\small

\endgroup

\clearpage
\appendix
\section{A. Formal statement and proof of the locality condition}

\paragraph{Setting.}
An input $x\!\sim\!\mathcal{D}$ prepares a state $\rho_{AB}(x)$ on registers $A,B$; the label is
$y\in\{0,1\}$. Fix finite local observable sets $\{O_A^i\}_{i\le d_A}$, $\{O_B^j\}_{j\le d_B}$
and define the \emph{local feature maps}
$m_A(x)=\big(\mathrm{tr}[\rho_A(x)\,O_A^i]\big)_i$ and
$m_B(x)=\big(\mathrm{tr}[\rho_B(x)\,O_B^j]\big)_j$, where $\rho_A=\mathrm{tr}_B\,\rho_{AB}$.
A \emph{fusion predictor} is any measurable $f(m_A,m_B)$; a \emph{joint} (reconstruction)
predictor may additionally access cross-cut correlators
$C(x)=\mathrm{tr}[\rho_{AB}(x)\,O_A\!\otimes\!O_B]$. Write $M=(m_A,m_B)$, and let
$h_2$ be the binary entropy.

\begin{proposition}[Losslessness]
\label{prop:app-lossless}
If the Bayes-optimal predictor factors through the local features---i.e.\ there exists
measurable $g$ with $\Pr[y{=}1\mid x]=g(M(x))$ $\mathcal{D}$-a.s.---then
$\inf_f R(f(M))=R^\star$, the Bayes risk, and the infimum is approached by any universal
approximator family (e.g.\ one-hidden-layer MLPs) trained on $(M,y)$. Reconstruction cannot
improve on it.
\end{proposition}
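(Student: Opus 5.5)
The plan is to prove the claim in three stages: exhibit a Bayes-optimal classifier that is a function of $M$, transfer it to a trained universal approximator, and then show that reconstruction has no room to improve.

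\emph{Stage 1: a Bayes-optimal fusion predictor.} Fix a loss; for concreteness use the $0$--$1$ risk $R(h)=\Pr[h\neq y]$, with the log-loss case handled identically. The Bayes classifier is $h^\star(x)=\mathbf{1}[\eta(x)\ge 1/2]$, where $\eta(x)=\Pr[y{=}1\mid x]$. By hypothesis $\eta=g\circ M$ holds $\mathcal{D}$-a.s. Hence $f^\star(M)=\mathbf{1}[g(M)\ge 1/2]$ is a measurable function of the local features, and it agrees with $h^\star$ almost surely. Therefore $R(f^\star(M))=R^\star$. Since no predictor, whether local or joint, can beat $R^\star$, this gives $\inf_f R(f(M))=R^\star$.

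\emph{Stage 2: approximation by a trained head.} The key object is the law of $M$ under $\mathcal{D}$. It is supported in a compact box, because every coordinate is bounded by $\|O\|_\infty$. I will work with the surrogate $g$ rather than the discontinuous threshold. By Lusin's theorem, $g$ coincides with a continuous $\tilde g$ outside a set of $M_\#\mathcal{D}$-measure at most $\epsilon$. Universal approximation on the compact box then gives an MLP $f_\theta$ with $\|f_\theta-\tilde g\|_\infty<\epsilon$.

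The excess $0$--$1$ risk of the plug-in rule $\mathbf{1}[f_\theta\ge1/2]$ obeys the standard bound $R-R^\star\le 2\,\mathbb{E}|f_\theta(M)-\eta|$. This quantity is at most $2(\epsilon+\epsilon)$, because $|f_\theta-\eta|\le1$ on the exceptional set. For log-loss I would instead use that the cross-entropy excess equals $\mathbb{E}\,\mathrm{KL}(\eta\|f_\theta)$. To keep this finite, clip $f_\theta$ into $[\delta,1-\delta]$.

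\emph{Stage 3: reconstruction adds nothing.} Any joint predictor is a measurable function of $(M,C)$, and hence of $x$. Its risk is therefore at least $R^\star$. An alternative route is via conditional independence: $\eta=g(M)$ means $y\perp (x,C)\mid M$, so conditioning on $C$ as well leaves the posterior unchanged.

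\emph{Main obstacle.} The delicate step is Stage 2. The phrase ``trained on $(M,y)$'' in the statement really asks for a consistency result for empirical risk minimization, not only for the existence of good parameters. I would establish existence, which is what the infimum statement needs. For the trained version, I would invoke standard universal consistency of sieve/MLP estimators with growing width, under bounded inputs. The assumptions I would state explicitly are these: an i.i.d.\ sample, and a width that grows with the sample size while the sample size divided by width times log(width) still tends to infinity.
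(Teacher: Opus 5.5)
Your proof is correct and follows the same three-step route as the paper: threshold $g(M)$ to attain $R^\star$, approximate $g$ in $L^1(\mathcal{D}_M)$ by the head and bound the plug-in excess risk by $2\,\mathbb{E}|f_\theta(M)-\eta|$, and observe that any joint predictor is still a function of $x$. The Lusin-plus-compact-box step only spells out the $L^1$ density that the paper asserts directly. Your remark that ``trained on $(M,y)$'' really requires an ERM consistency result, which the paper sidesteps with ``along a suitable sequence'', is a fair sharpening rather than a gap.
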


\begin{proof}
$R^\star=\mathbb{E}\,\min(g(M),1-g(M))$ is attained by thresholding $g(M)$, a function of $M$
alone; universal approximation gives $f_\theta\to g$ in $L^1(\mathcal{D}_M)$ along a suitable
sequence, and excess risk of plug-in rules is controlled by
$\mathbb{E}|f_\theta(M)-g(M)|$. Any predictor using more information than $x$ itself cannot beat
the Bayes risk, so joint readout offers no improvement.
\end{proof}

\begin{proposition}[Converse: a residual-information lower bound]
\label{prop:app-converse}
Suppose the label is determined by the correlator together with the local features,
$H(Y\mid M,C)=0$, but the correlator carries \emph{residual} label information
$\varepsilon:=I(Y;C\mid M)>0$. Then every fusion predictor $f(M)$ has error
\[
\Pr[f(M)\neq Y]\;\ge\;h_2^{-1}\!\big(H(Y\mid M)\big)\;=\;h_2^{-1}(\varepsilon),
\]
while the Bayes-with-$C$ predictor attains error $0$. Hence the excess risk of fusion is at
least $h_2^{-1}(\varepsilon)$.
\end{proposition}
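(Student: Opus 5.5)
The plan is a direct information-theoretic argument in three steps: an entropy identity, binary Fano's inequality, and a check that the correlator-aware predictor is exact.

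\emph{Step 1 (identity).} By the chain rule, $I(Y;C\mid M)=H(Y\mid M)-H(Y\mid M,C)$. The hypothesis $H(Y\mid M,C)=0$ then gives $H(Y\mid M)=\varepsilon$, which is the equality in the displayed bound. Because $Y\in\{0,1\}$ is discrete, both conditional entropies are well defined even when $M$ and $C$ are continuous. We define them as $\mathbb{E}\,h_2(\Pr[Y{=}1\mid\cdot])$ using regular conditional distributions, so no densities for $M$ or $C$ are needed.

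\emph{Step 2 (Fano for fusion).} Fix any fusion predictor $\hat Y=f(M)$. If $f$ is randomized, its randomness is independent of $Y$ given $M$, so $Y\to M\to\hat Y$ is a Markov chain. Let $P_e=\Pr[\hat Y\neq Y]$. Fano's inequality for a binary alphabet (the $\log(|\mathcal{Y}|-1)$ term vanishes) gives $H(Y\mid \hat Y)\le h_2(P_e)$. Data processing gives $H(Y\mid M)\le H(Y\mid\hat Y)$. Combining, $\varepsilon\le h_2(P_e)$.

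To invert, take $h_2^{-1}:[0,1]\to[0,\tfrac12]$ as the inverse of $h_2$ restricted to $[0,\tfrac12]$, where $h_2$ is increasing. If $P_e\le\tfrac12$, monotonicity gives $P_e\ge h_2^{-1}(\varepsilon)$. If $P_e>\tfrac12$, the bound holds trivially, since $h_2^{-1}(\varepsilon)\le\tfrac12$.

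\emph{Step 3 (joint predictor).} From $H(Y\mid M,C)=0$, $\Pr[Y{=}1\mid M,C]\in\{0,1\}$ almost surely. Hence $Y=\varphi(M,C)$ a.s. for the measurable function $\varphi=\mathbf{1}\{\Pr[Y{=}1\mid M,C]=1\}$. The Bayes-with-$C$ predictor $\varphi(M,C)$ therefore has zero error, so the Bayes risk with $C$ available is $0$. The excess risk of any fusion predictor is then its error, which is at least $h_2^{-1}(\varepsilon)$ by Step 2.

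The main obstacle is measure-theoretic rather than conceptual. We must justify the entropy chain rule and the Markov/data-processing step when $M$ and $C$ are real-valued. I would handle this by working throughout with conditional entropies of the binary $Y$ given $\sigma$-algebras, and with the concavity/Jensen form of data processing: $H(Y\mid M)=\mathbb{E}\,h_2(\eta(M))$, where $\eta(M)=\Pr[Y{=}1\mid M]$, and $H(Y\mid\hat Y)\ge H(Y\mid M)$ follows from the tower property and concavity of $h_2$. This avoids densities entirely.
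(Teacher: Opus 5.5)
Your proof is correct and follows the same route as the paper's. Both use the chain rule to get $H(Y\mid M)=\varepsilon$, then binary Fano to get $h_2(P_e)\ge\varepsilon$, inverted on the branch $[0,\tfrac12]$, and then exactness of the predictor thresholding $\Pr[Y{=}1\mid M,C]$. You also make explicit what the paper's one-line proof leaves implicit: the data-processing step $H(Y\mid\hat Y)\ge H(Y\mid M)$ inside Fano, the trivial case $P_e>\tfrac12$, and a density-free definition of the conditional entropies for continuous $M,C$.
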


\begin{proof}
By the chain rule, $H(Y\mid M)=H(Y\mid M,C)+I(Y;C\mid M)=\varepsilon$. Fano's inequality for
binary $Y$ gives, for any estimator $\hat Y=f(M)$,
$h_2(\Pr[\hat Y\neq Y])\ge H(Y\mid M)=\varepsilon$, i.e.\
$\Pr[\hat Y\neq Y]\ge h_2^{-1}(\varepsilon)$ (taking the branch $\le 1/2$). Since
$H(Y\mid M,C)=0$, the predictor thresholding $\Pr[Y{=}1\mid M,C]$ is exact.
\end{proof}

\paragraph{Instantiation in the paper.}
The entangled-data boundary experiment realizes the converse with
$C=\langle Z_aZ_b\rangle$-type correlators carrying all label information while every local
marginal is uninformative ($M$ independent of $Y$, so $\varepsilon=H(Y)=1$ bit and fusion is
forced to chance); the TFIM phase task realizes the losslessness case (local order parameters
are sufficient). The cut-entanglement $S_A$ enters through the overhead--entanglement bound
of Jing et al.\ (Phys.\ Rev.\ A 111, 012433): the resource reconstruction spends and the correlation
fusion discards are the same quantity, which motivates $S_A$ as the diagnostic; the paper
evaluates it empirically (knee and gap correlations) rather than claiming a tight theoretical
link between $S_A$ and $\varepsilon$.

\paragraph{A conjectured gap bound.}
The converse (Proposition~\ref{prop:app-converse}) bounds the fusion excess risk by $h_2^{-1}(\varepsilon)$ with
$\varepsilon=I(Y;C\mid M)$ the residual label information in the discarded cross-cut
correlator---a quantity that is task-specific and not directly measurable. Since $S_A$
upper-bounds the correlation available to carry that information, we conjecture a
\emph{monotone} surrogate: for the hardware-efficient ansatz with per-cut entanglement
entropies $S_A^{(j)}$, the fusion excess risk obeys
\[
R_{\mathrm{fusion}}-R^\star \;\le\; C\,\textstyle\sum_j\big(1-2^{-S_A^{(j)}}\big)
\]
for a task-dependent constant $C$ (the summand vanishes at $S_A^{(j)}{=}0$ and saturates at
$1$). On the $104$-run dense grid this envelope is consistent with the data---the fitted
form is monotone (Spearman $\rho{=}0.44$ between $S_A$ and the gap) and the bound holds for
$93\%$ of runs at $C{\approx}10$---but the constant is loose and the scatter is large
($R^2{=}0.10$ for a linear fit), so we state this as a target for future theory rather than
an established result; a tight, provable version relating $S_A$ to $\varepsilon$ is open.

\paragraph{Cross-scale transfer of the diagnostic (empirical).}
We tested whether $S_A$ measured at a \emph{cheap} width predicts the fusion gap at a
\emph{larger} width on the same task family (\texttt{exp\_scale\_transfer.py};
$\alpha\in\{0,0.3,0.6,0.9\}$, $3$ seeds): $S_A$ of the trained coupled circuit at $n{=}8$
($4{+}4$, $2$ cuts) against the independent-fusion gap at $n{=}12$ ($6{+}6$). At $n{=}12$ the
per-$\alpha$ mean gap is small and nearly flat in locality (means in $[-0.006,+0.022]$,
individual runs within $\pm0.15$)---independent fusion nearly matches reconstruction at this
width---so there is little \emph{systematic} residual for $S_A$ to track, and $S_A^{n=8}$ does
not rank-order the individual-run gaps (Spearman $-0.09$, not significant, $n{=}12$). We therefore report $S_A$ as a \emph{within-scale}
indicator (the $\rho{=}0.59$ dense-grid correlation is at fixed $n{=}4$); the shrinking gap
with width is consistent with independent training's cross-cut robustness improving at scale,
and cross-scale transfer of the diagnostic remains open.

\section{B. Hyperparameters and experimental protocol}

\paragraph{Architecture (all statevector experiments).}
Hardware-efficient ansatz per register: one $R_Y,R_Z$ pair per qubit per local block, $CZ$
entangling chain; two local blocks per register (pre- and post-coupling). Coupling layer:
$k$ $R_{ZZ}(\phi)$ gates across the boundary, $\phi$ trainable (init $\pi/2$) for the coupled
model, absent for independent fusion. Readout observables: $\langle Z_q\rangle$ on every qubit
plus boundary $\langle Z_aZ_b\rangle$ pairs (coupled model); fusion sees per-subcircuit
$\langle Z_q\rangle$ (plus $\langle X\rangle,\langle Y\rangle$ where stated).

\paragraph{Training.}
All trainers use \texttt{scipy} L-BFGS-B with finite-difference gradients and $\ell_2$
penalties $10^{-4}$ (fusion trainers: circuit and head) and $10^{-3}$ (stand-alone
MLP/logistic heads and the B0 linear head). Quantum-feature caching
memoizes subcircuit features on the quantum-parameter bytes (exact; no effect on results).
Fusion heads: one-hidden-layer tanh MLP, sigmoid (binary) or softmax ($K$-way) output.
Classical anchors: SVM-RBF ($C{=}5$) and MLP $(32,16)$, model-selected by 4-fold CV on the
training split only and scored on the identical held-out split.

\begin{table}[h]
\centering\scriptsize
\setlength{\tabcolsep}{1pt}
\begin{tabular}{@{}llllll@{}}
\toprule
Experiment & data (size) & split & seeds & maxiter & head \\
\midrule
$\alpha$-sweep (H1) & synth.\ (300) & 180/120 & 10 & 90 & MLP-8 \\
benchmarks & 4 datasets & 70/30 & 10 & 90 & MLP-8 \\
independent & synth.\ (220) & 150/70 & 10 & 90/120 & MLP-8 \\
shots & Breast-C.\ (569) & 70/30 & 10 & 90 & MLP-8 \\
cost crossover & synth.\ (200) & 140/60 & 10 & 70 & MLP-8 \\
boundary & Bell states (400) & 70/30 & 10 & -- & MLP-12 \\
width $n{=}4..8$ & synth.\ (200) & 140/60 & 5 & 45/50 & MLP-(8{+}n) \\
scale-up $n{=}12..20$ & synth.\ (160--200) & 70/30 & 5 & 35/45 & MLP-(8{+}n/2) \\
faithful & F-MNIST/MNIST/synth. & 70/30 & 10 & 55/60 & MLP-8 \\
faithful PCA-8 & F-MNIST/MNIST ($4{+}4$) & 70/30 & 10 & 55/60 & MLP-8 \\
multiclass & Iris-3/MNIST-012 & 70/30 & 5 & 75 & softmax-12 \\
ablation & synth.\ (240) & 160/80 & 5 & 55 & MLP-$h$ \\
TFIM & $n{=}8$ exact diag.\ (240) & 70/30 & 5 & -- & MLP-16 \\
noise (G5) & random circuits & -- & 5 & -- & -- \\
qiskit ($k{\le}5$) & random circuits & -- & 1--3 & -- & -- \\
dense-$\alpha$ & synth.\ (300) & 180/120 & 8 & 60 & MLP-8 \\
Marshall H2H & synth.\ (300) & 180/120 & 10 & 60 & logistic \\
\bottomrule
\end{tabular}
\caption{Per-experiment protocol. ``maxiter'' = L-BFGS-B iterations (coupled/fusion where two
values). For real datasets, standardization, PCA, and angle scaling are fit on the training
split only and applied to the test split (clipped to the encoding range), identically for
every model. Image datasets: PCA-4 of standardized pixels, $110$--$130$ samples/class.}
\end{table}

\paragraph{Noise models.}
Finite shots: Gaussian noise on expectation values with std $\gamma_{\mathrm{tot}}/\sqrt{s}$
(reconstruction) vs.\ $1/\sqrt{s}$ (fusion), $30$ trials/setting; binomial sampling for the
quantum-data tasks ($400$ shots). Device noise: (i) depolarizing $p_1{=}0.005$, $p_2{=}0.02$,
readout $0.02$; (ii) \texttt{FakeManilaV2} calibration with circuits transpiled to its native
basis; $20{,}000$ shots, $5$ seeds, depth-matched fusion probe. Live-device check
(\texttt{exp\_hardware.py}): five random instances per $k\in\{1,2\}$ on \texttt{ibm\_fez}
(IBM Heron), $4000$ shots per circuit, everything ISA-transpiled at optimization level~$3$, no error
mitigation (raw sampler counts);
QPD subexperiments ($12$/$72$ circuits) vs.\ each half-width subcircuit executed
independently ($2$ circuits), both against exact statevector references. Unlike the
uniform-bias Aer models, live per-qubit biases vary in sign across subexperiments and
partially cancel in the signed sum, so at full per-subexperiment shots both readouts sit at
comparable few-percent error---reconstruction paying $6\times$/$36\times$ the executions.
Equal-budget variant (\texttt{exp\_hardware\_budget.py}): $8000$ shots per readout split
evenly across each readout's circuits ($2$ fusion settings vs.\ $2\cdot6^{k}$
subexperiments; $k{=}3$ uses $3{+}3$-qubit registers), three instances per $k$.

\section{C. Ablation and multiclass details}

\paragraph{Ablation ($5$ seeds).}
On a hard cross-cut task ($\alpha{=}0.75$) we ablate the fusion feature set
($\langle Z\rangle$ / $\langle Z,X\rangle$ / $\langle Z,X,Y\rangle$ per qubit) and head width
($h\in\{4,8,16\}$). Accuracy is robust across the grid ($0.88$--$0.94$); $\langle Y\rangle$
adds little beyond $\langle Z,X\rangle$, whose width-$8$ cell is best ($0.935$). No single
knob dominates.

\paragraph{Multiclass ($5$ seeds).}
With a $K$-way softmax fusion head over the same independently trained subcircuit features,
late fusion reaches $0.951$ on $3$-class Iris and $0.964$ on $3$-class MNIST ($0/1/2$),
within ${\sim}0.01$ of a tuned classical model ($0.964$/$0.972$)---the method extends to
multiclass at the same linear measurement cost.

\section{D. Paired statistics for headline comparisons}

Table~\ref{tab:paired} reports, for every headline comparison in the main text, the mean
per-seed \emph{paired} gap, a bootstrap $95\%$ CI ($10{,}000$ resamples of the paired
differences), and a two-sided exact Wilcoxon signed-rank $p$-value. Matching claims
(``fusion matches reconstruction'') are supported by CIs tight around zero; superiority
claims by CIs excluding zero. The smallest attainable exact $p$ at $n{=}5$ pairs is
$0.0625$. The tests are exploratory: no multiple-comparison correction is applied (with
$n{\le}40$ pairs the discrete exact test cannot reach Bonferroni-level thresholds over $34$
rows), and the bootstrap CIs carry the primary inference. The table shows the headline subset;
all $34$ comparisons are available from the authors on request.

\begin{table}[h]
\centering\scriptsize
\setlength{\tabcolsep}{1pt}
\begin{tabular}{@{}lrrlr@{}}
\toprule
comparison (A $-$ B) & $n$ & gap & $95\%$ CI & $p$ \\
\midrule
fus.\,$-$\,rec., $\alpha$-sweep pooled & 40 & $-.006$ & $[-.018,+.005]$ & $.44$ \\
fus.\,$-$\,rec., $\alpha{=}1$ (parity) & 10 & $-.037$ & $[-.057,-.019]$ & $.008$ \\
fus.\,$-$\,rec., Moons & 10 & $+.002$ & $[-.028,+.038]$ & $.76$ \\
fus.\,$-$\,rec., Circles & 10 & $-.009$ & $[-.016,-.003]$ & $.03$ \\
fus.\,$-$\,rec., Iris & 10 & $.000$ & $[.000,.000]$ & $1.0$ \\
fus.\,$-$\,rec., Breast-C. & 10 & $-.006$ & $[-.015,+.001]$ & $.19$ \\
fus.\,$-$\,rec., Fashion-MNIST & 10 & $.000$ & $[-.012,+.012]$ & $1.0$ \\
fus.\,$-$\,rec., MNIST & 10 & $+.001$ & $[-.004,+.006]$ & $.81$ \\
fus.\,$-$\,rec., F-MNIST (PCA-8) & 10 & $-.010$ & $[-.029,+.008]$ & $.52$ \\
fus.\,$-$\,rec., MNIST (PCA-8) & 10 & $+.003$ & $[-.003,+.008]$ & $.75$ \\
fus.\,$-$\,rec., synth.\ $\alpha{=}0.75$ & 10 & $-.024$ & $[-.043,-.003]$ & $.06$ \\
fus.\,$-$\,Kawase, Breast-C. & 10 & $+.262$ & $[+.225,+.296]$ & $.002$ \\
fus.\,$-$\,Kawase, synth.\ $\alpha{=}0.75$ & 10 & $+.343$ & $[+.276,+.410]$ & $.002$ \\
fus.\,$-$\,rec., $32$ shots & 10 & $+.192$ & $[+.175,+.209]$ & $.002$ \\
$Q$-dial\,$-$\,Marshall, $T{=}1$ (pool.) & 40 & $+.042$ & $[+.017,+.072]$ & $<.001$ \\
$Q$-dial\,$-$\,Marshall, $T{=}2$ (pool.) & 40 & $+.043$ & $[+.019,+.073]$ & $<.001$ \\
fus.\,$-$\,class., Fashion-MNIST & 10 & $+.001$ & $[-.017,+.018]$ & $.86$ \\
fus.\,$-$\,class., synth.\ $\alpha{=}0.75$ & 10 & $-.033$ & $[-.056,-.010]$ & $.04$ \\
fus.\,$-$\,class., Iris-3\,/\,MNIST-012 & 5 & \multicolumn{2}{l}{$-.013$/$-.008$; CIs ${\subset}[-.031,.009]$} & $\ge.38$ \\
\bottomrule
\end{tabular}
\caption{Paired per-seed statistics (fus.\ = fusion, rec.\ = reconstruction, class.\ = tuned
classical). Positive gaps favor the first-named method.}
\label{tab:paired}
\end{table}

\paragraph{Compute.}
All experiments run on a single CPU node (AMD EPYC 9474F, $16$ cores allocated, $62$\,GB RAM,
Linux; no GPU). Software: Python~3.13.5, NumPy~2.5.1, SciPy~1.18.0, scikit-learn~1.9.0,
Qiskit~2.5.0, Qiskit-Aer~0.17.2, \texttt{qiskit-addon-cutting}~0.10.0. Seeds: every experiment
threads an explicit integer seed through data generation, parameter initialization, and heads;
seed lists are stated per experiment in the table above. The full result set regenerates from
per-experiment scripts; every figure regenerates from versioned result files via one plotting
script.

\section{E. Use of generative AI}

 A large language model (an AI coding and
writing assistant) was used throughout the development of this work, in three roles:
\textbf{(i) methodology assessment}---discussing and stress-testing experimental designs,
baselines, and controls, and proposing analyses, ablations, and robustness checks that the
authors then specified and ran; \textbf{(ii) implementation}---writing and refactoring the
simulator, experiment, and plotting code; and
\textbf{(iii) writing}---drafting and editing manuscript text.

The authors formulated the research questions and hypotheses, made every design and
interpretation decision, ran all experiments, checked the proofs and the reference list, and
verified every number reported in the paper against the released result files. No text,
figure, or numerical result was accepted without author verification. No AI system is listed as an author or cited as a source, and the authors take full
responsibility for the entire content of the paper, including text, figures, references, and
appendices.

\end{document}